\def\ps@headings{%
\def\@oddhead{\mbox{}\scriptsize\rightmark \hfil \thepage}%
\def\@evenhead{\scriptsize\thepage \hfil \leftmark\mbox{}}%
\def\@oddfoot{}%
\def\@evenfoot{}}
\newtheorem{theorem}{Theorem}
\newtheorem{lemma}[theorem]{Lemma}
\theoremstyle{definition}
\theoremstyle{remark}
\newcommand{\cm}[1]{}
\newcommand{\ve}{\varepsilon}
\newcommand{\rbar}{\bar{r}}
\begin{document}
\title{Energy-Delay Tradeoffs in a Load-Balanced Router}

\author{\IEEEauthorblockN{Matthew Andrews}
\IEEEauthorblockA{
Bell Labs, Murray Hill, NJ \\
andrews@research.bell-labs.com}
\and
\IEEEauthorblockN{Lisa Zhang}
\IEEEauthorblockA{
Bell Labs, Murray Hill, NJ \\
ylz@research.bell-labs.com}
\vspace{-1cm}
}
\date{}
\maketitle{}

\begin{abstract}
The Load-Balanced Router architecture has received a lot of attention 
because it does not require centralized scheduling at the internal switch
fabrics. In this paper we reexamine the architecture, motivated by its
potential to turn off multiple components
and thereby conserve energy in the presence of low traffic.

We perform a detailed analysis of the
queue and delay performance of a Load-Balanced Router under a simple random 
routing algorithm. We calculate probabilistic bounds for queue size
and delay, and show that the probabilities drop exponentially with
increasing queue size or delay.  We also demonstrate a tradeoff in
energy consumption against the queue and delay performance.  
\end{abstract}

\section{Introduction}

The concept of a {\em Load-Balanced Router} was
studied at length at the beginning of last decade. See for example
\cite{ChangLJ01,ChangLL02,KeslassyM02,KeslassyCYMHSM03,ChangLY03,ChangLS04,KeslassyCM04}.
In this work we analyze various performance aspects of a
Load-Balanced Router, motivated by the potential
energy saving enabled by this architecture. 

Energy efficiency in networking has recently attracted a large amount
of attention. One of the main aims in much of this work is captured by
the slogan {\em energy-follows-load}, also known as {\em
energy-proportionality}.  In other words, we wish to make sure that
the energy consumed by a networking device matches the amount of
traffic that the device needs to carry.  This is in contrast to more
traditional architectures for which the device operates at full rate
at all times even if it is lightly loaded.  Indeed, by a conservative
estimate in a study conducted by the Department of Energy in 2008, at
least 40\% of the total consumption by network elements such as
switches and routers can be saved if energy proportionality is
achieved. This translates to a saving of 24 billion kWh per year
attributed to data networking~\cite{Readout08}.  A recent
study~\cite{FranciniFKR11} further confirms that the power consumption of
some state-of-the-art commercial routers stays within a small
percentage of the peak power profile regardless of traffic
fluctuation, for example the significant daily variation in traffic
load \cite{RoughanGKRYZ02}.


Various approaches have been proposed in order to achieve
energy-proportionality.  {\em Speed scaling}, also known as {\em rate
adaptation}, and {\em powering down} are two popular methods for
effectively matching energy consumption to traffic load.  The former
refers to setting the processing speed of a network element according
to traffic load. It is typically assumed that the energy consumption is
superlinear with respect to the operating rate.  The latter refers to
turning off the element at certain times and so it either operates at
the full rate or zero rate. Both methods are the subject of active
research, though most of the work focuses on optimizing an individual
element in
isolation~\cite{IraniP05,Garrett08,YaoDS95,LiLY05,BansalKP07,
ChanCLLMW07,IraniSG07,IraniSG03,NedevschiPIRW08,FranciniS10}.  A central question to both methods is to
set the speed so as to minimize energy usage while maintaining a
desirable performance, e.g.\ latency or throughput.

\cm{
model is one
example. The basic idea is to turn off a device when it is not
actually processing traffic, which is most effective if most savings
comes from not powering the device.  In this model one of the key
questions is how often to change the status of a component since
typically there is a lag between when the component is powered up and
when it can actually start to process traffic.

A second approach that is often proposed for achieving
energy-proportionality is rate-adaptation. With this method the speed
of a component is adjusted to match the amount of traffic that needs
to be served.  It is typically assumed that the power usage of the
component is a superlinear function of its speed. For this method a
typical problem is to set the speed so as to minimize energy usage
while maintaining a desirable performance (e.g.\ latency or
throughput) of the component.

In this paper we consider another approach and address the energy
savings that can be achieved by examining the architecture of a
particular type of switch and handling the traffic in such a way that
enables us to turn off certain portions of the switch whenever the
traffic load is low. 
}

The Load-Balanced Router
architecture has the potential to
handle the traffic in such a way that portions of the device can be
turned off in response to lightly loaded traffic.  We provide what we
believe is the first detailed analysis of queue size and delay in a
Load-Balanced Router, and we provide a tradeoff between energy consumption and
queue size/delay.  In order to describe our results in more detail we
now give a brief description of the Load-Balanced Router architecture.

\subsection{Motivation for Traditional Load-Balanced Architecture}

One of the most fundamental goals of any router  architecture is to
achieve {\em stability}, which is sometimes referred to as 100\%
throughput. In other words the router aims to process all the
arriving traffic so long as no input and no output are inherently
overloaded.  The key difficulty with doing this is that the arriving
traffic may be highly non-uniform, i.e.\ if $A_{ik}$ is the
arrival rate for traffic going from input $i$ to output $k$, we will
typically have $A_{ik}\neq A_{i'k'}$ for $ik\neq
i'k'$. Early work on switching considered a crossbar architecture in
which matchings between the inputs and the outputs are set up at every
time step. 
It was shown in \cite{McKeownAW96} that the Maximum Weight Matching algorithm (with weights equal
to the backlog for each input-output pair) can ensure
stability. Subsequent papers looked at simplifications of this
scheduler that could still achieve stability.


However, a major drawback of all these approaches is that they require
a centralized scheduler with information about the backlogs of data on
each input-output pair. A solution is the Load-Balanced Router
that 
could make use of randomized
routing ideas first proposed by Valiant \cite{Valiant82}. In the
Load-Balanced Router there is a middle stage placed between the input
nodes and the output nodes. 
Each arriving
packet is routed to a middle-stage node chosen at random. After passing
through the middle stage each packet is then forwarded to its
designated output. In order to realize this architecture we place a
switching fabric between the input stage and middle stage and
between the middle stage and the output stage. The beauty of this
design is that the random routing ensures that for each of these
switching fabrics no complicated scheduling is needed. All we need to
do is repeat a uniform schedule in which each connection is served at
least once~\cite{ChangLJ01}.



\subsection{Energy Consideration for Revisiting Load-Balanced Architecture}

Our motivation for revisiting the Load-Balanced Router architecture is
that it provides an attractive framework for studying energy
proportionality~\cite{Antonakopoulos11}. 
For example, the number of active nodes in
the middle stage can be reduced in the presence of light traffic, and
increased with heavier traffic.  
The switch fabric between the input and middle stage and between the
middle stage and the output can be functionally viewed as full meshes,
as indicated in Figure \ref{f:lb2}.  
One possibility is to implement the mesh 
with round-robin crossbars. For example, Keslassy's thesis
\cite{KeslassyM02} 
assumes that the input, output and the middle stage are all
of size $n$ and each fabric is an $n\times n$ crossbar that operates in a time-slotted fashion. At each
time slot $t$ the first fabric connects input node $i$ to middle node
$(i+t)\mod n$ and the second fabric connects middle node $j$ to output
node $(j+t)\mod n$. In this implementation, if the number of active nodes
in the middle stage is reduced then the crossbar can either slowdown
or be turned off periodically.  

Adjusting the size of the middle stage or the speed of the switching
fabric requires considerable technological and engineering challenges.
%
In this note we do not aim to address these issues.
 Our focus is on analyzing queue size and delay given the
active portion of the middle stage, which leads to a tradeoff between
power consumption and the size of the active middle stage.


\subsection{Model and Definition} 
We formally define the Load-Balanced Router architecture as
follows. The router has $n$ inputs and $n$ outputs. We normalize the
line rate such that it equals 1 on each input and output link.  We
also have a middle stage lying in between the inputs and outputs that
consists of $m$ nodes. The traditional
Load-Balanced Router has $m=n$.  However, here we treat $m$ as a
separate parameter.  In between the input and the middle stage we have
an $n\times m$ mesh. Effectively each link in this mesh operates at rate
$\alpha /m$ for a {\em speedup} of $\alpha\ge 1$.  Similarly, in
between the middle and the output stage we have an $m\times n$ mesh
with link rate $\beta/m$ for a speedup factor of $\beta\ge 1$.  Each of the
two meshes is {\em input-buffered}, i.e.\ each input node has a
separate buffer for each middle node and each middle node has a
separate buffer for each output node. See Figure~\ref{f:lb2}.
\begin{figure}
\begin{center}
\includegraphics[width=3.in]{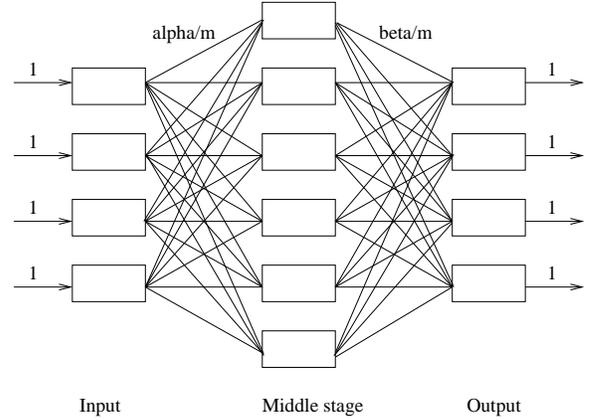}
\caption[]{A $4\times 6\times 4$ Load-Balanced Router 
architecture, where the number of nodes in the middle stage can
be different from the number of inputs and outputs.}
\label{f:lb2}
\end{center}
\end{figure}

From now on, we use $i$ to index the input, $j$ the middle stage and
$k$ the output.  We refer to the packets that wish to go from input
$i$ to output $k$ as $ik$ packets. Similarly, link $ij$ connects input
$i$ and node $j$ in the middle stage and link $jk$ connects node $j$
in the middle stage and output $k$. Buffer $ij$ (resp. $jk$) at node
$i$ (resp. $j$) buffers packets that are waiting to traverse link $ij$
(resp. $jk$).

The key to possible energy savings is that we assume that not all nodes need to be active at periods of low load. We suppose that 
at any time we can choose $m'\le m$ nodes to be active in the middle stage and that such a configuration requires power $w(m')$ for some function $w(\cdot)$. 

When an $ik$ packet $p$ arrives we choose a random middle node $j$
{\em among the active nodes in the middle stage} and place $p$ in
buffer at node $ij$.~\footnote{Note that round robin is another possibility
for choosing a middle-stage node. However, a random choice is more
robust against adversarial types of traffic arrivals.  We do not go 
into details here.} The link $ij$ operates continuously at rate
$\alpha/m$ and transmits packets in the $ij$ buffer in a FIFO manner.
Similarly, when $p$ arrives at node $j$ in the middle
stage, it is placed in the buffer $jk$. The link $jk$ continuously
operates at rate $\beta/m$ and transmits packets in the $jk$ buffer in
a FIFO manner.  As we can see the scheduling for both stages requires
no centralized intelligence and is extremely simple.


\cm{
We refer
to the packets that wish to go from input $i$ to output $k$ as $ik$
packets. When an $ik$ packet $p$ arrives we choose a random middle
node $k$ and place $p$ in the buffer at node $i$ for data that is
destined for node $k$. The first switch fabric then transports the
packet from input node $i$ to middle node $k$ where it is placed into
the buffer of packets destined for output $k$. The second switch
fabric then transports packet $p$ to output $k$.  As already
mentioned, the main reason for using a load balanced architecture is
that we do not need to use any complicated scheduling at the two
switch fabrics.  The total rate of traffic that needs to be
transferred from input $i$ to middle node $k$ is bounded above by
$\sum_k A_{ik}/m\le 1/m$ and the total rate of traffic from
middle node $j$ to output $k$ is bounded above by $\sum_i
A_{ik}/m$. We shall make the natural assumption that each output
link is underloaded (since otherwise the switch would be saturated and
packets would be dropped). In other words we assume that $\sum_i
A_{ik}\le 1-\ve$ for some small $\ve$. {\sc do we want to pull
this assumption out to a separate model section?}  Hence the total
rate of traffic from middle node $k$ to output $k$ is bounded above by
$(1-\ve)/m$.  This means that we can schedule each of the two switch
fabrics with a simple round-robin schedule. In other words, if we
assume that each fabric operates in a time-slotted fashion, at time
slot $t$ the first fabric connects input node $i$ to middle node
$(i+t)\mod m$ and the second fabric connects middle node $k$ to output
node $(k+t)\mod n$.  In other words both schedules do not required any centralized intelligence and remain the same over time. 
For the sake of generality we shall assume that
each fabric can have a speedup. In particular, we assume that in each
time step the first fabric can transmit data of size $\alpha\ge 1$ on
each connection that it sets up and the second fabric can transmit
data of size $\beta\ge 1$ on each connection that it sets up. {\sc what to say about the granularity of packets?}
}

Lastly we describe our traffic model.  As is common in work on scheduling in routers, we assume that packets are of unit size (or else are partitioned into cells of unit size).  
For any $s,t$ let $A_{ik}(s,t)$ be the
amount of $ik$ traffic arriving at the router in the time interval
$[s,t)$ and let $A_i(s,t)=\sum_k A_{ik}(s,t)$ and $A_k(s,t)=\sum_i
A_{ik}(s,t)$.  We assume that the $ik$ traffic, the input $i$ traffic
and the output $k$ traffic are $(\sigma_{ik},r_{ik})$, $(\sigma,1)$ and
$(\sigma,1-\ve)$ constrained respectively for some burst parameters
$\sigma_{ik}$ and $\sigma$, for some rate parameters $r_{ik}$ and
for some load parameter $\ve$.  In other words we assume that,
\begin{eqnarray*}
A_{ik}(s,t)&\le& \sigma_{ik}+r_{ik}(t-s)\\
A_{i}(s,t)&\le& \sigma+(t-s)\\
A_{k}(s,t)&\le& \sigma+(1-\ve)(t-s).
\end{eqnarray*}

We remark that the arrival rates $r_{ik}$ will typically vary over
time. Indeed, the energy savings that we hope to gain come precisely
from the fact that we can match the number of active components to the
traffic.  However, we assume that this happens over a slow timescale
and so we perform our scheduling analysis as if the rates $r_{ik}$ are
fixed.

\subsection{Results}

\begin{itemize}
\item 
In Section~\ref{s:rate-only} we make the simple statement that 
$\lceil \rbar m\rceil$ active middle-stage nodes suffice
for handling the traffic load
where $\rbar\ge \sum_i r_{ik}$ for all $k$ and $\rbar
\ge \sum_k r_{ik}$ for all $i$. This in turn  implies that the energy required to serve the traffic over the long-term is 
$w(\lceil \rbar m\rceil)$. 
\cm{
We also discuss the typical energy savings that would
come from turning off the remaining middle stages. These energy
savings would come from two sources.  First of all we would not need
to operate the middle stages themselves (and all they buffering that
they employ).  Second, if we have a smaller number of middle stages
then we can run the two meshes at a slower rate (or we could
periodically turn the meshes off for a number of timeslots).  (We
remark that changing the speed of the meshes would require some
centralized knowledge of the maximum loaded input or output port.)
However, since we assume that the $ik$ arrival rates $r_{ik}$ are
fixed over the time constants that we are considering, it would
require a minimum of signaling to convey this load information.
}
\item 
In Sections~\ref{s:overview} to \ref{s:delay-exact} we present a probabilistic analysis that bounds the queue sizes at the input and middle stages and bounds the delay experienced by packets as they travel from the 
router input to the router output. We first 
bound the probability for a queue size to exceed a certain amount
$q$, and the  delay to exceed a certain amount of $d$, assuming a
fixed sized number of middle stages.  An important feature of our bounds is that they decrease
exponentially with $q$ and $d$.  For a fixed traffic load, we then
derive a trade off between the queue size/delay performance and the
number of active middle-stage nodes which in turn gives a
energy-delay and energy-queue tradeoff.


Leaving energy minimization aside, we believe that this is the 
first detailed analysis of the delay and queue performance of a
Load-Balanced Router, which may be interesting in its own right.

\item 
In Section~\ref{s:numerical} we present some numerical examples
to validate our analytical findings. 
\end{itemize}

We note that our approach is different from the traditional powerdown
and rate adaptation techniques since we will be directing traffic in
such a way that enables some components to be off. In
other words for the middle stage nodes we are not trying to match service
rate to a traffic process that is exogenous. We are trying to match
the active middle stage nodes to a traffic process that is under our control due to our
ability to route within the router.

We also remark that our bounds could be used to govern how many middle
nodes are active in a Load-Balanced Router without necessarily
computing all the bounds on the fly. We could instead precompute the bounds and create a simple
look-up table that determines how many middle stage nodes should be active based on measurements of the load at the inputs and outputs. 




\section{Throughput Analysis}
\label{s:rate-only}
Recall that $r_{ik}$ represents the current arrival rate of traffic
that wishes to be routed from input $i$ to output $k$. Let $r_i =
\sum_k r_{ik}$ and let $r_k=\sum_i r_{ik}$. 
Let $\rbar$ be such that $r_i\le \rbar$ and $r_k\le \rbar$ for all
$i,k$. 
\begin{lemma}
If we use $m'$ middle stage elements then the router is not overloaded
if $m'\ge \rbar m$. Hence the power required to serve all the traffic in the long-term is at most $w(\lceil \rbar m\rceil)$. 
\end{lemma}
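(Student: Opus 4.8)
The plan is to account for the long-term arrival rate that the random routing induces on each individual link, and then compare it against the link capacity. First I would fix an active set of $m'$ middle nodes and compute the rate on an input-to-middle link $ij$. Since each arriving $ik$ packet is sent to one of the $m'$ active middle nodes chosen uniformly at random, in the long run a fraction $1/m'$ of the $r_{ik}$ traffic traverses link $ij$. Summing over destinations gives a long-term rate of $\sum_k r_{ik}/m' = r_i/m'$ on link $ij$, which is independent of the particular active node $j$. By the symmetric argument on the output side, each middle node receives a $1/m'$ fraction of the traffic bound for output $k$, so every link $jk$ carries long-term rate $\sum_i r_{ik}/m' = r_k/m'$.

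Next I would impose the no-overload condition, namely that the induced rate on each link not exceed its capacity. The link $ij$ runs at rate $\alpha/m$ and the link $jk$ at rate $\beta/m$, so it suffices to have $r_i/m' \le \alpha/m$ and $r_k/m' \le \beta/m$ for every $i$ and $k$. Using $r_i \le \rbar$ and $r_k \le \rbar$, both inequalities follow from $\rbar/m' \le \min\{\alpha,\beta\}/m$. Since the speedups satisfy $\alpha \ge 1$ and $\beta \ge 1$, it is in turn enough that $\rbar/m' \le 1/m$, i.e.\ $m' \ge \rbar m$, which is exactly the claimed condition.

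Finally, because $m'$ must be a whole number, the smallest admissible choice is $m' = \lceil \rbar m \rceil$, and activating this many middle nodes draws power $w(\lceil \rbar m \rceil)$ by definition of the function $w(\cdot)$; this gives the long-term power bound in the second sentence of the lemma.

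I do not anticipate a genuine obstacle here, since the statement reduces to a rate-counting argument. The one point that needs care is that the random routing pins down only the expected (long-term average) rate $r_i/m'$ on a given link rather than an instantaneous guarantee, so the phrase ``not overloaded'' should be read as a statement about long-term rates. The finer fluctuations---the burstiness captured by the $\sigma$ parameters and the resulting queue-size and delay tails---are exactly what the probabilistic analysis of the later sections addresses.
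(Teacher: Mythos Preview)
Your argument is correct and follows essentially the same route as the paper's proof: compute the per-link induced rate $\rbar/m'$, compare it to the link capacity $\alpha/m$ (respectively $\beta/m$), and observe that $m'\ge \rbar m$ together with $\alpha,\beta\ge 1$ gives the required inequality. Your write-up is somewhat more detailed (spelling out the sum over $k$, treating both stages explicitly, and noting the integrality and the long-term-rate caveat), but the underlying idea is identical.
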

\begin{proof}
Follows from the fact that if we turn on $m'$ middle stage elements
then for each middle element $j$, $1\le j \le m'$, the traffic rate
that will be routed on link $ij$ will be at most $\rbar/m'$ which
by assumption is at most $(m'/m)/m'=1/m$. Since the capacity on the link
$ij$ is $\alpha/m$ for some $\alpha>1$, this implies that the $ij$
link is not overloaded. A similar argument applies to each 
link $jk$ between the middle and output stages
\end{proof}

\section{Overview of Delay Analysis}

\label{s:overview}
Before diving into details of the delay analysis,
we first provide a high level overview of
our techniques.

\cm{
The analysis of Section~\ref{s:rate-only} in which we only
focus on how many middle stages are needed to prevent the switch
becoming overloaded is very simple and essentially just considers the
aggregate rate through each of the switch elements. We therefore focus
on the techniques required to perform the analysis of
Sections~\ref{s:delay-approx} and \ref{s:delay-exact} in which we wish
to trade-off the number of active middle stages with the delay
experienced by data as it passes through the switch. 
}

\subsection{Relationship with Stochastic Network Calculus}

\cm{
Our formulas in Section~\ref{} that look at the energy required to
prevent the switch from being overloaded come from simple calculations
based on the arrival rates $r_{ik}$. The meat of our analysis is
therefore focused on the tradeoff between delay and the number of
middle stages (and hence energy). To do this we}

We use a variant of network calculus~\cite{Cruz91a,Cruz91b,LeBoudec} sometimes referred to as {\em
stochastic network calculus}.  The original form of
network calculus derives delay bounds by imposing {\em upper bounds}
on the amount of traffic arriving at a node via {\em arrival curves}
and {\em lower bounds} on the amount of traffic served by a node {\em
via service curves}.  By relating these two curves we can both obtain
a bound on the delay suffered by data at a network element and also
characterize the arrival curves for the data at any downstream nodes.
However, in the traditional network calculus all such bounds are
required to hold with probability 1.  In our context this will lead to
extremely weak bounds since there is a non-zero probability that the
router will send a large number of packets to a single middle-stage node,
thus condemning them all to extremely poor service.

\cm{
An alternative therefore is to use a {\em stochastic network calculus}
in which we only wish for bounds on service to hold with high
probability.  A detailed formulation of a stochastic network calculus
was outlined in a series of papers by Jiang and
others~\cite{Jiang05analysisof,jiang06}. However, we were unable to
use these techniques directly to obtain tight bounds.  At a high level
the reason is that Jiang's approach tries to obtain curves that bound
the probability that the delay suffered by a flow at a network element
(the first stage of the load-balanced router) is more than a certain
amount $d$.  They then use these curves to bound the worst-case
arrivals at downstream elements (in our case this could be the middle
stage of the router.) However, in our case we obtain better bounds by
not utilizing the earlier delay curves to bound the behavior at
downstream nodes and simply basing all our calculations on an analysis
of the external arrival patterns of various ensembles of flows.  These
ensembles can be analyzed directly via Chernoff bounds.  (If we had to
work directly with the probabilistic arrival curves then we would have
had to handle more complicated convolutions arrival and service
curves. {\sc do we really want to say that?}  We elaborate further on
this distinction in the initial parts of Section~\ref{s:delay-approx},
where we give an overview of the types of bounds we use.)
}

An alternative therefore is to use a {\em stochastic network calculus}
in which we only wish for bounds on service to hold with high
probability.  A detailed formulation of a stochastic network calculus
was outlined in a series of papers by Jiang and
others~\cite{Jiang05analysisof,jiang06}.  At a high level Jiang's
approach obtains curves that bound the probability that the delay
exceeds a certain amount at upstream elements, and then use these
curves to bound the worst-case arrivals at downstream
elements. However, we follow a slightly different approach since we
are able to obtain better bounds by not directly utilizing the service
curves at the input nodes to bound the arrivals at the middle-stage
nodes.  We instead base our calculations at the middle stage on the
external arrivals of various ensembles of flows that might then be
time-shifted due to delays at the input.  This allows us to avoid handling 
complicated convolutions of arrival and service curves.
We elaborate further on this distinction later.

\subsection{Our Approach}
We divide our analysis into a series of pieces. 
\paragraph{Bound the queue build-up at the input} 
Recall that between input $i$ and middle-stage node $j$ we effectively have
a link with speed $\alpha/m$. Also recall that the input has a buffer
especially dedicated to the traffic that wishes to go from input $i$
to middle-stage node $j$. Suppose that at some time $t$ this buffer has
level $q$.  Let $s$ be the last time that this buffer was empty. Note
that during the time interval $[s,t)$ the $ij$ link served data of
total size $\alpha(t-s)/m$. Therefore the total data that arrived for
link $i$ during the time interval $[s,t)$ is at least
$q+(\alpha(t-s)/m)$.

Therefore the probability that link $i$ has a backlog of size $q$ at
time $t$ is upper bounded by the probability that {\em for some $s\le
t$}, the amount of $ik$ data arriving at link $i$ during the interval
$[s,t)$ is at least $q+(\alpha(t-s)/m)$. However, recall that the
traffic arriving to input $i$ arrives at rate at most $\rbar$ 
and each packet is sent to a
middle-stage node chosen uniformly at random. Hence, for fixed $s$
and $t$, we can use a Chernoff bound to bound the probability that the
amount of $ik$ data arriving during the interval $[s,t)$ is at least
$q+(\alpha(t-s)/m)$. It is easy to show that this probability
decreases exponentially in $s$ and so we can use a union bound to
bound the probability that this occurs for {\em any} $s\le t$. 

\paragraph{Bound the delay experienced at the input} 
Translating our bound on queue size into a bound on delay is
simple. Since the transmission rate on the $ij$ link is $\alpha/m$,
the event that the head of line packet for the $ij$ link at time $t$
has experienced delay $d$, implies the event that at time $t-d$, the
queue size for the $ij$ link was at least $d\alpha/m$. An upper bound
on the probability of the latter event can be derived using the method
described earlier.

\paragraph{Bound the queue build-up at middle stage} 
We now give an overview of how we perform
the analysis at the head of the $jk$ queue at middle element $j$. This
forms the crux of our analysis since we are in a more complicated
situation due to the fact that the packet arrivals at the middle stage
are affected by how they are served at the input. 
As before note that if the queue for link $jk$ has size $q$ at time
$t$, then for some time $s\le t$, the arrivals for link $jk$ at middle-stage 
node $j$ during the time interval $[s,t)$ must be at least
$q+(\beta(t-s)/m)$. Now suppose that the oldest of this data arrived
at the input at time $s-d$, and suppose in addition that this
data arrived at middle-stage node $j$ on link $ij$.  We can therefore state
that the total amount of data arriving at the system in the time
interval $[s-d,t)$ that is destined for link $jk$ is at least
$q+(\beta(t-s)/m)$ and the delay experienced by data arriving at link
$ik$ at time $s-d$ is at least $d$.  Via union bounds we can calculate
the probability that this occurs for {\em any} $i,s,d$. In particular,
for $d$ small we can say that the probability that the arrivals exceed
$q+(\beta(t-s)/m)$ is small whereas if $d$ is large then the
probability that the link $ik$ delay is at least $d$ is small.

We make three points about this analysis at the middle stage.
\begin{itemize}

\item 
This is where our analysis deviates slightly from the traditional
methodology of network calculus. We do not calculate a service curve
for the input and use that service curve to bound the arrivals
at the middle stage. 
Instead we analyze the delay behavior at the
input and then use that calculation to bound the middle stage
queue size using an expression that {\em still involves arrivals at
the input}.

\item 
Our initial analysis makes a slight approximating assumption in that 
for the $ijk$ defined above, we treat the
arriving traffic for link $ij$ and the delay behavior on link $ij$ as
independent. In reality of course they will be correlated due to
traffic that passes from input $i$ to output $k$ through middle-stage node
$j$.  However, this approximation will typically not have a large
effect for larger values of $m$ since the $ijk$ traffic only forms a
$1/m$ fraction of the total $kj$. However, in order to get a more
accurate bound, in Section~\ref{s:delay-exact} we present a more
detailed expression that deals with the correlation explicitly.

\item 
Our analysis is based on Chernoff bounds. However, the form 
of the Chernoff bound that we use changes depending on
whether the bound on data arrivals that we are considering for a
particular link is close to or far from the expected number of data
arrivals for that link.  This unfortunately leads to a somewhat
involved case analysis.
\end{itemize}

\paragraph{Bound the delay experienced at the middle stage} 
The conversion of the queue-size bound to a delay bound at the middle stage can be done in
the exact same way as for the input. Now that we have an expression for the delay at both stages we can convert it into an 
expression for the end-to-end delay from the inputs to the outputs.  


\section{Analytical Bounds on Delay}
\label{s:delay-approx}
We now present the details of our delay analysis. 
We rely heavily on the following Chernoff bounds~\cite{Scheideler00}. In particular we use (\ref{eq:cb2}) to derive
analytical bounds and we use (\ref{eq:cb1}) to derive tighter bounds for 
numerical simulation. 
\begin{theorem}[Chernoff Bound]
Let $X_1, \dots, X_n$ be independent binary random variables, and let
$\mu$ be an upper bound on the expectation $E[\sum_i X_i]$.  For all $\delta > 0$,
\begin{eqnarray}
Pr[\sum_{i} X_i \ge (1+\delta)\mu] & \le & 
\left({{e^\delta} \over {(1+\delta)^{1+\delta}}} \right)^\mu \label{eq:cb1}\\
&\le& e^{-\min(\delta^2,\delta)\cdot \mu/3}\label{eq:cb2}
\end{eqnarray}
\end{theorem}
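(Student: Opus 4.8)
The plan is to establish (\ref{eq:cb1}) first by the standard exponential-moment (Markov) method, and then to derive (\ref{eq:cb2}) from it by an elementary analysis of the scalar function $\delta \mapsto e^\delta/(1+\delta)^{1+\delta}$. Write $X=\sum_i X_i$ and let $p_i = \Pr[X_i=1]$, so that $E[X]=\sum_i p_i \le \mu$. For any fixed $t>0$, I would apply Markov's inequality to the nonnegative random variable $e^{tX}$: since $\{X \ge (1+\delta)\mu\}=\{e^{tX}\ge e^{t(1+\delta)\mu}\}$, this gives
$$\Pr[X \ge (1+\delta)\mu] \le \frac{E[e^{tX}]}{e^{t(1+\delta)\mu}}.$$

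Next I would compute the moment generating function. By independence, $E[e^{tX}]=\prod_i E[e^{tX_i}]$, and since each $X_i$ is binary, $E[e^{tX_i}]=1+p_i(e^t-1)\le e^{p_i(e^t-1)}$ using $1+x\le e^x$. Taking the product over $i$ and using $e^t-1>0$ together with $\sum_i p_i\le\mu$ yields $E[e^{tX}]\le e^{(e^t-1)\mu}$. Substituting into the Markov bound gives $\Pr[X \ge (1+\delta)\mu]\le \exp\bigl((e^t-1-t(1+\delta))\mu\bigr)$ for every $t>0$. Optimizing the exponent over $t$ — its derivative $e^t-(1+\delta)$ vanishes at $t=\ln(1+\delta)>0$ — produces exactly $e^\delta/(1+\delta)^{1+\delta}$ raised to the power $\mu$, which is (\ref{eq:cb1}).

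For (\ref{eq:cb2}) it suffices to prove the scalar inequality $e^\delta/(1+\delta)^{1+\delta}\le e^{-\min(\delta^2,\delta)/3}$, since raising both positive sides to the power $\mu\ge0$ preserves the ordering and reproduces the stated exponent. After taking logarithms this reads $\delta-(1+\delta)\ln(1+\delta)\le -\min(\delta^2,\delta)/3$, and I would split into two regimes. For $0<\delta\le1$ the claim is $(1+\delta)\ln(1+\delta)-\delta\ge \delta^2/3$; letting $g(\delta)$ denote the difference of the two sides, one checks $g(0)=0$, $g'(\delta)=\ln(1+\delta)-2\delta/3$ with $g'(0)=0$, and that $g'$ stays nonnegative on $[0,1]$ (it rises to an interior maximum near $\delta=\tfrac12$ and remains positive at $\delta=1$, where $\ln 2>2/3$), so $g\ge0$. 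For $\delta>1$ the claim reduces to $(1+\delta)\ln(1+\delta)\ge 4\delta/3$; here the difference $h(\delta)$ satisfies $h(1)=2\ln 2-4/3>0$ and $h'(\delta)=\ln(1+\delta)-\tfrac13>0$ for $\delta\ge1$, so $h$ is increasing and stays positive.

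I expect the main obstacle to be precisely this last, purely analytic step: verifying the two scalar inequalities with the specific constant $1/3$ and correctly matching them to the two branches of $\min(\delta^2,\delta)$ at the crossover $\delta=1$. The probabilistic core (Markov, the bound $1+x\le e^x$, and the optimization in $t$) is routine, but obtaining the clean exponent $e^{-\min(\delta^2,\delta)\mu/3}$ requires the careful sign analysis of $g'$ and $h'$ above rather than a one-line estimate.
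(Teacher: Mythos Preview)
Your argument is correct and is the standard textbook proof of the multiplicative Chernoff bound: the exponential-moment/Markov step together with $1+x\le e^x$ and the optimization $t=\ln(1+\delta)$ gives (\ref{eq:cb1}) exactly, and your sign analysis of $g(\delta)=(1+\delta)\ln(1+\delta)-\delta-\delta^2/3$ on $[0,1]$ and of $h(\delta)=(1+\delta)\ln(1+\delta)-4\delta/3$ on $[1,\infty)$ is valid (in particular, $g'$ is unimodal with $g'(0)=0$ and $g'(1)=\ln 2-2/3>0$, and $h'(\delta)=\ln(1+\delta)-1/3>0$ for $\delta\ge 1$ with $h(1)=2\ln 2-4/3>0$).

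There is nothing to compare against in the paper itself: the theorem is quoted there as a known tool with a reference, and no proof is given. Your write-up supplies precisely the classical derivation one would find in that reference.
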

In what follows we sometimes refer to $\mu$ as the {\em aggregated
mean} and $\delta$ as the {\em excess factor}. For conciseness we
shall often use the aggregated mean even though $\mu$ is strictly
speaking a bound on the mean.

\subsection{Input stage analysis}
We begin by computing the queue distribution at the head of link $ij$,
where $i\in [1,n]$ is an input and $j \in [1,m]$ is in the middle stage.  
As in Section~\ref{s:rate-only} we assume that $r_i\le \rbar$ and
$r_k\le \rbar$ for all $i,k$ and we assume that $m'$ middle stage
elements are currently active.  Initially, in order to keep the
formulas manageable, we shall derive our formulas for the case in
which $\rbar=1$ and $m'=m$. We shall also assume that the arriving
traffic is smooth and so we do not have the burst terms $\sigma_i$ and
$\sigma_k$. Later on, we shall show how to adapt the formulas when
these assumptions do not hold.

Let $A_{ijk}(t)$ be the binary random variable that indicates whether
a packet with input $i$ output $k$ is mapped to middle stage $j$ at
time $t$.  We use $A_{ijk}(t_1,t_2)$ to denote $\sum_{t=t_1}^{t_2}
A_{ijk}(t)$, the total arrival in the duration $(t_1,t_2]$.  Let
$Q^{(1)}_{i,j}(t)$ be the random variable for the queue size at the
head of link $ij$ at time $t$. To compute $Q^{(1)}_{i,j}(t)$, let us
assume $s<t$ be the last time that the queue at $ij$ is empty.  For
$Q^{(1)}_{i,j}(t)$ to be larger than a value $q$, the arrival
$A_{ijk}(s,t)$ over all $k\in [1,n]$ must be at least $\alpha\cdot
{{t-s}\over m}+q$ since link $ij$ is operated at a rate $\alpha\over
m$. Formally, 
\begin{eqnarray}
Pr[Q^{(1)}_{ij}(t)\ge q] \le 
\sum_{s\le t} Pr \left[\sum_{k=1}^n A_{ijk}(s,t) 
\ge \alpha\cdot {{t-s}\over m}+q\right]
\label{eq:q1}
\end{eqnarray} 
We now bound the right-hand-side of (\ref{eq:q1}). Since $A_{ijk}(t)$
are independent binary variables, we apply the Chernoff bound to every
term in the summation.  The following is the aggregated mean $\mu$
and the excess factor $\delta$ for the above probability. 
\begin{eqnarray*}
\left\{
\begin{array}{lll}
\mu & = & {{t-s}\over m} \\
\delta & = & \alpha -1 +{{qm}\over{t-s}}
\end{array}
\right.,
\end{eqnarray*}
since $\alpha\cdot {{t-s}\over m}+q = \mu(1+\delta)$. 
If $\alpha\ge 2$, we have $\delta\ge 1$ and we use~(\ref{eq:cb2}) to derive,
\begin{eqnarray}
Pr[Q^{(1)}_{ij}(t)\ge q] 
\le \sum_{t-s=0}^{\infty} e^{-{{t-s}\over{3m}}(\alpha-1)-{q\over 3}}
\le {{e^{-{q\over 3}}}\over {1-e^{-{{\alpha-1}\over{3m}}}}}. \label{eq:lessthan2}
\end{eqnarray}
If $\alpha <2$, we have $\delta \ge 1$ when $t-s\le {{qm}\over{2-\alpha}}$.
Otherwise $\delta<1$.  We apply~(\ref{eq:cb2}) as follows.
\begin{eqnarray}
&& Pr[Q^{(1)}_{ij}(t)\ge q]\nonumber \\
& \le & \sum_{t-s=0}^{{qm}\over{2-\alpha}} 
	e^{-{{t-s}\over{3m}}(\alpha-1)-{q\over 3}} +
\sum_{t-s={{qm}\over{2-\alpha}}}^{\infty}
	e^{-{{t-s}\over{3m}}(\alpha-1)^2} \nonumber\\
& \le &
{{e^{-{q\over 3}}}\over {1-e^{-{{\alpha-1}\over{3m}}}}}
+{{e^{-{{(\alpha-1)^2}\over{3(2-\alpha)}}q}
\over{1-e^{-{(\alpha-1)^2}\over{3m}}}}}\label{eq:morethan2}
\end{eqnarray}

Note that both expressions are exponentially decreasing in $q$. For the time being we shall proceed according to the case $\alpha\ge
2$ since this leads to more manageable formulas. Later on, we shall
indicate where to adapt the formulas when we are in a scenario where
$\alpha<2$. 

Let $D^{(1)}_{ij}(t)$ be the maximum delay that some packet has
experienced at time $t$ in the queue $Q^{(1)}_{ij}(t)$.  For this
delay to be more than $d$ at time $t$, some packet must be in the
queue at time $t-d$. Since link $ij$ operates at rate ${\alpha\over
m}$ in a FIFO manner, the queue at time $t-d$ must be at least
$d\alpha/m$. Therefore,
\begin{eqnarray*}
Pr[D^{(1)}_{ij}(t)\ge d] \le Pr[Q^{(1)}_{ij}(t-d)\ge d\alpha/m] 
\le {{e^{-{d\alpha}\over {3m}}}\over {1-e^{-{{\alpha-1}\over{3m}}}}}.
\end{eqnarray*}
By a union bound,
\begin{eqnarray}
Pr[\exists i,~D^{(1)}_{ij}(t)\ge d] 
\le n\cdot {{e^{-{d\alpha}\over {3m}}}\over {1-e^{-{{\alpha-1}\over{3m}}}}}.
\label{eq:d1}
\end{eqnarray}

Recall that the above analysis was performed in the absence of the
burst term $\sigma$.  If we do have bursty traffic then we can adjust
the formulas by making the following changes to the aggregated mean
and the excess factor and propagating these changes through the resulting formulas. 
\begin{eqnarray*}
\left\{
\begin{array}{lll}
\mu & = & {{t-s+\sigma}\over m} \\
\delta & = & \frac{(t-s)(\alpha-1)+qm-\sigma}{t-s+\sigma}
\end{array}
\right.
\end{eqnarray*}

\subsection{Middle stage analysis}

We now compute the queue distribution at the head of link $jk$, where
$j\in [1,m]$ is in the middle stage and $k\in [1,n]$ is an output.
Let $Q_{jk}^{(2)}(t)$ be defined similarly as $Q_{ij}^{(1)}(t)$.  To
bound $Q_{jk}^{(2)}$ at time $t$, let $s\le t$ be the last time that
the queue $Q_{jk}^{(2)}$ was empty. For $Q_{jk}^{(2)}(t)$ to be larger
than $q$, there must be at least $\frac{(t-s)\beta}{m}+q$ distinct
packets in $Q_{jk}^{(2)}$ during the time period $[s,t]$. Further, let $s-d$
be the earliest time one of these packets arrived at an input, say
$i$. This packet must experience a delay of at least $d$ in
$Q_{ij}^{(1)}$.
Therefore,
\begin{eqnarray*}
&& Pr[Q_{jk}^{(2)}(t)\ge q] \\
&\le&
\sum_d\sum_{s\le t} Pr\left[\sum_i A_{ijk}(s-d,t)\ge 
\frac{(t-s)\beta}{m}+q\right] \\
&& \hspace{1.2in}
\cdot Pr[\exists i,~D_{ij}^{(1)}(s)\ge d] \\
& \le &
\sum_d {{ne^{-{d\alpha}\over {3m}}}\over {1-e^{-{{\alpha-1}\over{3m}}}}}\\
&& \hspace{.2in}
\cdot \sum_{s\le t} Pr\left[\sum_i A_{ijk}(s-d,t)\ge 
\frac{(t-s)\beta}{m}+q\right]
\end{eqnarray*}
Note that the bound~(\ref{eq:d1}) on the delay distribution at the input stage is independent
of the time index. We can therefore move $Pr[\exists
i,~D_{ij}^{(1)}((s))\ge d]$ to outside the summation indexed by time
in the second inequality above.

We proceed to bound $\sum_{s\le t} Pr\left[\sum_i A_{ijk}(s-d,t)\right.$ $\ge$
$\left.\frac{(t-s)\beta}{m}+q\right]$ based on the following expressions for the 
expectation $\mu$ and the excess factor $\delta$.
\begin{eqnarray*}
\left\{
\begin{array}{lll}
\mu & = & {{t-s+d}\over m} \\
\delta & = & \frac{(t-s)(\beta-1)+qm-d}{t-s+d}
\end{array}
\right.,
\end{eqnarray*} 
since $(1+\delta)\mu = \frac{(t-s)\beta}{m}+q$.
There are two cases to consider,
$\beta \le 2$ or $\beta > 2$.  

For $\beta\le 2$, we further consider the
following subcases, depending on ${qm\over d}-1$, the value
of $\delta$ when $t-s=0$. Note that as $t-s$ increases, the value of
$\delta$ approaches $\beta-1$. 

\begin{itemize}
\item {\em Case 1a: $1 \le \frac{qm}{d}-1$, and $t-s\le
\frac{qm-2d}{2-\beta}$}. In this case $1\le \delta$. Since $\delta \mu
= \frac{(t-s)(\beta-1)+qm-d}{m}$, bound~(\ref{eq:cb2}) implies (\ref{eq:1a}).

\item {\em Case 1b: $1\le \frac{qm}{d}-1$, and $\frac{qm-2d}{2-\beta}
\le t-s$.}  In this case $\beta-1\le \delta \le 1$,
which implies $\delta^2\mu \ge (\beta-1)^2\mu$.
Bound~(\ref{eq:cb2}) in turn implies (\ref{eq:1b}).

\item {\em Case 2: $\beta-1\le \frac{qm}{d}-1\le 1$, and for all values 
of $t-s$.}  In this case,
$\beta-1\le \delta \le 1$, which is the same situation as 1b
and implies (\ref{eq:2}) in the same way.

\item {\em Case 3a: $\frac{qm}{d}-1\le \beta-1$, 
$\frac{d(\beta+1)-2qm}{\beta-1}\le t-s$
and $\frac{d(\beta+1)-2qm}{\beta-1} \le 0$}.  In this case
$\frac{\beta-1}{2}\le \delta\le \beta-1$ for all values of $t-s$.
Since $\delta^2\mu \ge {(\beta-1)^2\mu\over 4}$, bound~(\ref{eq:cb2})
implies (\ref{eq:3a}).

\item {\em Case 3b: $\frac{qm}{d}-1\le \beta-1$, 
and $0<\frac{d(\beta+1)-2qm}{\beta-1}\le t-s$}.  In this case
$\frac{\beta-1}{2}\le \delta\le \beta-1$
for $t-s\ge \frac{d(\beta+1)-2qm}{\beta-1}$.
Since $\delta^2\mu \ge {(\beta-1)^2 \mu\over 4}$, bound~(\ref{eq:cb2})
implies (\ref{eq:3b}).

\item{\em Case 3c: $\frac{qm}{d}-1\le \beta-1$, and $t-s<
\frac{d(\beta+1)-2qm}{\beta-1}$}. We trivially upper bound the
probability by 1 as in~(\ref{eq:3c}).
\end{itemize}

\begin{eqnarray}
&& Pr[Q_{jk}^{(2)}(t)\ge q] \nonumber \\
&\le&
\sum_d\sum_{s\le t} Pr[\sum_i A_{ijk}^{(1)}(s-d,t)\ge \frac{(t-s)\beta}{m}+q]\\
&& \hspace{1.3in}\cdot Pr[\exists i,~D_{ij}^{(1)}((s))\ge d]
\nonumber 
\\
\mbox{\em{1a}}
&\le&
\sum_{d=0}^{\frac{qm}{2}}
\left(\frac{ne^{-\frac{d\alpha}{3m}}}{1-e^{-\frac{\alpha-1}{3m}}}\right) 
\sum_{t-s=0}^{\frac{qm-2d}{2-\beta}}
e^{-\frac{1}{3}\frac{(t-s)(\beta-1)+qm-d}{m}} +
\label{eq:1a}
\\
\mbox{\em{1b}}
&&\sum_{d=0}^{\frac{qm}{2}}
\left(\frac{ne^{-\frac{d\alpha}{3m}}}{1-e^{-\frac{\alpha-1}{3m}}}\right)
\sum_{t-s=\frac{qm-2d}{2-\beta}}^{\infty}
e^{-\frac{1}{3}(\beta-1)^2\frac{t-s+d}{m}} +
\label{eq:1b}
\\
\mbox{\em{2}}
&&\sum_{d=\frac{qm}{2}}^{\frac{qm}{\beta}}
\left(\frac{ne^{-\frac{d\alpha}{3m}}}{1-e^{-\frac{\alpha-1}{3m}}}\right)
\sum_{t-s=0}^{\infty}
e^{-\frac{1}{3}(\beta-1)^2\frac{t-s+d}{m}}+
\label{eq:2}
\\
\mbox{\em{3a}}
&&\sum_{d=\frac{qm}{\beta}}^{\frac{2qm}{\beta+1}}
\left(\frac{ne^{-\frac{d\alpha}{3m}}}{1-e^{-\frac{\alpha-1}{3m}}}\right)
\sum_{t-s=0}^{\infty}
e^{-\frac{1}{3}\frac{(\beta-1)^2}{4}\frac{t-s+d}{m}}+
\label{eq:3a}
\\
\mbox{\em{3b}}
&&\sum_{d=\frac{2qm}{\beta+1}}^{\infty}
\left(\frac{ne^{-\frac{d\alpha}{3m}}}{1-e^{-\frac{\alpha-1}{3m}}}\right)
\nonumber \\
&& \hspace{1cm}\sum_{t-s=\frac{d(\beta+1)-2qm}{\beta-1}}^{\infty}
e^{-\frac{1}{3}\frac{(\beta-1)^2}{4}\frac{t-s+d}{m}}+
\label{eq:3b}
\\
\mbox{\em{3c}}
&&\sum_{d=\frac{2qm}{\beta+1}}^{\infty}
\left(\frac{ne^{-\frac{d\alpha}{3m}}}{1-e^{-\frac{\alpha-1}{3m}}}\right)
\sum_{t-s=0}^{\frac{d(\beta+1)-2qm}{\beta-1}} 1
\label{eq:3c}
\end{eqnarray}

\cm{
\\
\mbox{\em{1a}}
&\le&
\sum_{d=0}^{\infty}
\left(\frac{ne^{-\frac{d\alpha}{3m}}}{1-e^{-\frac{\alpha-1}{3m}}}\right)
\left(\frac{e^{-\frac{1}{3}(q-\frac{d}{m})}}{1-e^{-\frac{\beta-1}{3m}}}\right)+
\nonumber
\\
\mbox{\em{1b}}
&&
\sum_{d=0}^{\infty}
\left(\frac{ne^{-\frac{d\alpha}{3m}}}{1-e^{-\frac{\alpha-1}{3m}}}\right)
\left(\frac{e^{-\frac{1}{3}(\beta-1)^2(q-\frac{d}{m})}}{1-e^{-\frac{(\beta-1)^2}{3m}}}\right)+
\nonumber
\end{eqnarray}
\begin{eqnarray*}
\mbox{\em{2}}
&&
\sum_{d={{qm}\over 2}}^{\infty}
\left(\frac{ne^{-\frac{d\alpha}{3m}}}{1-e^{-\frac{\alpha-1}{3m}}}\right)
\left(\frac{e^{-\frac{1}{3}(\beta-1)^2\frac{d}{m}}}{1-e^{-\frac{(\beta-1)^2}{3m}}}\right)+
\\
\mbox{\em{3a}}
&&
\sum_{d=\frac{qm}{\beta}}^{\infty}
\left(\frac{ne^{-\frac{d\alpha}{3m}}}{1-e^{-\frac{\alpha-1}{3m}}}\right)
\left(\frac{e^{-\frac{1}{12}(\beta-1)^2\frac{d}{m}}}{1-e^{-\frac{(\beta-1)^2}{12m}}}
\right)+
\\
\mbox{\em{3b}}
&&
\sum_{d=\frac{2qm}{\beta+1}}^{\infty}
\left(\frac{ne^{-\frac{d\alpha}{3m}}}{1-e^{-\frac{\alpha-1}{3m}}}\right)
\left(\frac{e^{-\frac{1}{6}(\beta-1)\frac{d\beta-qm}{m}}}{1-e^{-\frac{(\beta-1)^2}{12m}}}
\right)+
\\
\mbox{\em{3c}}
&&
\sum_{d=\frac{2qm}{\beta+1}}^{\infty}
\left(\frac{ne^{-\frac{d\alpha}{3m}}}{1-e^{-\frac{\alpha-1}{3m}}}\right)
\left(\frac{d(\beta+1)-2qm}{\beta-1}\right)
\\
}
\begin{eqnarray*}
\mbox{\em{1a}}
&\le&
\left(\frac{n}{1-e^{-\frac{\alpha-1}{3m}}}\right)
\left(\frac{e^{-\frac{q}{3}}}{1-e^{-\frac{\beta-1}{3m}}}\right)
\left(\frac{1}{1-e^{-\frac{\alpha-1}{3m}}}\right) + 
\\
\mbox{\em{1b}}
&&
\left(\frac{n}{1-e^{-\frac{\alpha-1}{3m}}}\right)
\left(\frac{e^{-\frac{1}{3}(\beta-1)^2q}}{1-e^{-\frac{(\beta-1)^2}{3m}}}\right)
\left(\frac{1}{1-e^{-\frac{\alpha-(\beta-1)^2}{3m}}}\right) +
\\
\mbox{\em{2}}
&&
\left(\frac{ne^{-\frac{\alpha}{6}q}}{1-e^{-\frac{\alpha-1}{3m}}}\right)
\left(\frac{e^{-\frac{1}{3}(\beta-1)^2q}}{1-e^{-\frac{(\beta-1)^2}{3m}}}\right)\left(\frac{1}{1-e^{-\frac{\alpha-(\beta-1)^2}{3m}}}\right)
+
\\
\mbox{\em{3a}}
&&
\left(\frac{ne^{-\frac{\alpha}{3\beta}q}}{1-e^{-\frac{\alpha-1}{3m}}}\right)
\left(\frac{e^{-\frac{1}{12\beta}(\beta-1)^2q}}{1-e^{-\frac{(\beta-1)^2}{12m}}}
\right)
\left(\frac{1}{1-e^{-\frac{4\alpha+(\beta-1)^2}{12m}}}\right)+
\\
\mbox{\em{3b}}
&&
\left(\frac{ne^{-\frac{2\alpha}{3(\beta+1)}q}}{1-e^{-\frac{\alpha-1}{3m}}}\right)
\left(\frac{e^{-\frac{1}{6}(\beta-1)^2q}}{1-e^{-\frac{(\beta-1)^2}{12m}}}
\right)
\left(\frac{1}{1-e^{-\frac{2\alpha+\beta(\beta-1)}{6m}}}\right)+
\\
\mbox{\em{3c}}
&&
\left(\frac{(\beta+1)ne^{-\frac{\alpha}{3m}}}{(\beta-1)(1-e^{-\frac{\alpha-1}{3m}})}\right) \left(\frac{1}{(1-e^{-\frac{\alpha}{3m}})^2}\right)\\
&& \left(\left(\frac{2qm}{\beta+1}\right)e^{-\frac{\alpha}{3m}(\frac{2qm}{\beta+1}-1)}-
\left(\frac{2qm}{\beta+1}-1\right)e^{-\frac{\alpha}{3m}(\frac{2qm}{\beta+1})}\right)
\end{eqnarray*}
Note that every term above decreases exponentially with the queue size
$q$.

The case in which $\beta>2$ is simpler. We omit the analysis for
space consideration.
\cm{
To complete the analysis of the middle stage, we examine the simpler case in which
$\beta > 2$. The subcases depend on whether $\delta \ge 1$.
\begin{itemize}
\item
{\em Case 1: ${qm \over d}\ge 2$}. In this case $1\le \delta$. We use
bound~(\ref{eq:cb2}) with $\delta \mu =
\frac{(t-s)(\beta-1)+qm-d}{m}$.

\item
{\em Case 2a: ${qm \over d} < 2$ and
$t-s \ge {{2d-qm}\over {\beta-2}}$}.  Again $1\le \delta$, and we use
bound~(\ref{eq:cb2}) with $\delta \mu =
\frac{(t-s)(\beta-1)+qm-d}{m}$.

\item
{\em Case 2b: ${qm \over d} < 2$ and
$t-s < {{2d-qm}\over {\beta-2}}$}.  We use 1 as the trivial bound on the probability.
\end{itemize}
Like the case in which $\beta<2$, we have the following in which each term decreases
exponentially with the queue size $q$.
\begin{eqnarray*}
&& Pr[Q_{jk}^{(2)}(t)\ge q] \nonumber \\
&\le&
\sum_d\sum_{s\le t} 
Pr[\sum_i A_{ijk}^{(1)}(s-d,t)\ge \frac{(t-s)\beta}{m}+q]\times\\
&&
\hspace{1cm}Pr[\exists i,~D_{ij}^{(1)}((s))\ge d]
\end{eqnarray*}
\begin{eqnarray*}
&\le&
\sum_{d=0}^{\frac{qm}{2}}
\left(\frac{ne^{-\frac{d\alpha}{3m}}}{1-e^{-\frac{\alpha-1}{3m}}}\right) 
\sum_{t-s=0}^{\infty}
e^{-\frac{1}{3}\frac{(t-s)(\beta-1)+qm-d}{m}} +
\\
& &
\sum_{d=\frac{qm}{2}}^{\infty}
\left(\frac{ne^{-\frac{d\alpha}{3m}}}{1-e^{-\frac{\alpha-1}{3m}}}\right) 
\sum_{t-s=\frac{2d-qm}{\beta-2}}^{\infty}
e^{-\frac{1}{3}\frac{(t-s)(\beta-1)+qm-d}{m}} +
\\
& &
\sum_{d=\frac{qm}{2}}^{\infty}
\left(\frac{ne^{-\frac{d\alpha}{3m}}}{1-e^{-\frac{\alpha-1}{3m}}}\right)
\sum_{t-s=0}^{\frac{2d-qm}{\beta-2}}
1
\\
&\le&
\sum_{d=0}^{\infty}
\left(\frac{ne^{-\frac{d\alpha}{3m}}}{1-e^{-\frac{\alpha-1}{3m}}}\right) 
\left(
\frac{e^{-\frac{1}{3}(q-\frac{d}{m})}}{1-e^{-\frac{\beta-1}{3m}}}
\right)+
\\
& &
\sum_{d=\frac{qm}{2}}^{\infty}
\left(\frac{ne^{-\frac{d\alpha}{3m}}}{1-e^{-\frac{\alpha-1}{3m}}}\right) 
\left(
\frac{e^{-\frac{1}{3}\frac{\beta d-qm}{(\beta-2)m}}}{1-e^{-\frac{\beta-1}{3m}}}
\right)+
\\
& &
\sum_{d=\frac{qm}{2}}^{\infty}
\left(\frac{ne^{-\frac{d\alpha}{3m}}}{1-e^{-\frac{\alpha-1}{3m}}}\right)
\left(\frac{2d-qm}{\beta-2}\right)
\\
&\le&
\left(\frac{n}{1-e^{-\frac{\alpha-1}{3m}}}\right) 
\left(
\frac{e^{-\frac{q}{3}}}{1-e^{-\frac{\beta-1}{3m}}}
\right)
\left(
\frac{1}{1-e^{-\frac{\alpha-1}{3m}}}
\right)
+
\\
& &
\left(\frac{ne^{-\frac{q\alpha}{6}}}{1-e^{-\frac{\alpha-1}{3m}}}\right) 
\left(
\frac{e^{-\frac{1}{3}(\frac{q}{2})}}{1-e^{-\frac{\beta-1}{3m}}}
\right)
\left(
\frac{1}{1-e^{-\frac{\alpha}{3m}-\frac{\beta}{(\beta-2)m}}}
\right)
+
\\
& &
\left(\frac{2ne^{-\frac{\alpha}{3m}}}{(\beta-2)(1-e^{-\frac{\alpha-1}{3m}})}\right) \\
&& \left(\left(\frac{qm}{2}\right)e^{-\frac{\alpha}{3m}(\frac{qm}{2}-1)}-
\left(\frac{qm}{2}-1\right)e^{-\frac{\alpha}{3m}(\frac{qm}{2})}\right)\\
&&\left(\frac{1}{(1-e^{-\frac{\alpha}{3m}})^2}\right)
\end{eqnarray*}

}
Recall that all of the above formulas are for the case $\alpha\ge 2$. If $\alpha<2$ then we need to replace all the factors (\ref{eq:lessthan2}),
$$
\frac{e^{\frac{-d\alpha}{3m}}}{1-e^{-\frac{\alpha-1}{3m}}}
$$
with (\ref{eq:morethan2})
$$
\frac{e^{\frac{-d\alpha}{3m}}}{1-e^{-\frac{\alpha-1}{3m}}}+
\frac{e^{\frac{-d\alpha(\alpha-1)^2}{3m(2-\alpha)}}}{1-e^{-\frac{(\alpha-1)^2}{3m}}}.
$$
If output $k$ has a burst term $\sigma$ then as in the input stage we can reflect this by adjusting the aggregated mean and the excess factor to,
\begin{eqnarray*}
\left\{
\begin{array}{lll}
\mu & = & {{t-s+d+\sigma}\over m} \\
\delta & = & \frac{(t-s)(\beta-1)+qm-d-\sigma}{t-s+d+\sigma}
\end{array}
\right..
\end{eqnarray*} 

We conclude this section by bounding the delay distribution at the middle stage.
$D_{jk}^{(2)}$ be the maximum delay that some packet has experienced
at time $t$ in the queue $Q_{jk}^{(2)}(t)$.  For this delay to be more
than $d$ at time $t$, some packet must be in the queue at time
$t-d$. Since link $jk$ operates in a FIFO manner, the queue at time
$t-d$ must be at least $d\beta/m$.
Hence $Pr[D_{jk}^{(2)}(t)\ge d]\le Pr[Q_{jk}^{(2)}(t-d)\ge d\beta/m]$.


We stress again that all of the above formulas are for the case $\alpha\ge 2$. If $\alpha<2$ then we need 
to replace all the factors (\ref{eq:lessthan2}) with factors (\ref{eq:morethan2}).

\subsection{Eventual end-to-end delay}
Now that we have delay bounds for the two stages of the router 
we can obtain a bound on the end-to-end delay distribution. 
In the following we let $g_{m,\beta}(q)$ be a shorthand for the upper bound
that we have derived on $Pr[Q_{jk}^{(2)}(t)\ge q]$ and $f_{m,\alpha}(q)$ a
shorthand for our upper bound on $Pr[Q_{ij}^{(1)}(t)\ge q]$. 
Suppose that an $ijk$ packet is still traversing the router at time $t$, but it arrived at the router before time $t-d$. 
It is not hard to see that either it is still waiting to traverse the input stage at time $t-\frac{d}{2}$, or it arrived at the middle stage by time $t-\frac{d}{2}$.
Hence the probability that the end-to-end delay is at least $d$ is at most, 
\begin{eqnarray*}
&&Pr[D_{ij}^{(1)}(t-\frac{d}{2})\ge \frac{d}{2}]+
Pr[D_{jk}^{(2)}(t)\ge \frac{d}{2}]
\\
&\le&
f_{m,\alpha}(\frac{d\alpha}{2m})+
g_{m,\beta}(\frac{d\beta}{2m}).
\end{eqnarray*}

\subsection{Characterizing the tradeoff with the number of middle elements}
In the above delay analysis we made a number of simplifying
assumptions to keep the notation manageable. For example we held
$m'=m$ and $\rbar=1$. We now demonstrate that we can use the above
formulas to handle the case of arbitrary $m'$ and $\rbar$. 
This in turn allows to characterize the tradeoff between energy consumption and end-to-end delay. 
The main
idea is to scale time so that the arrival rate at the inputs is scaled
to $1$. We also adjust the link rates on the two stages of the mesh.
In particular, if we wish to analyze a system with a given $m$, $m'$,
$\rbar$, $\alpha$ and $\beta$, we define a new system characterized by
$\hat{m}$, $\hat{m'}$, $\hat{\rbar}$, $\hat{\alpha}$ and $\hat{\beta}$
in which we set,
\begin{eqnarray*}
\hat{m'}=\hat{m}=m ~~~~~~ \hat{\rbar}=1 ~~~~~~~
\hat{\alpha}={{\alpha m'}\over {m\rbar}} ~~~~~~~
\hat{\beta}={{\beta m'}\over{m\rbar}}
\end{eqnarray*}
Then it is not hard to see that if we scale time by a factor $\rbar$,
the new system has exactly the same behavior as the old one. However,
we are now working in a system with $\hat{m'}=\hat{m}$ and
$\rbar=1$. Hence we can apply the analysis that we have already
derived. 

Our main result is thus,
\begin{theorem}
If we run the router with $m'$ middle stage elements then it uses energy $w(m')$ and the probability that the end-to-end delay is at least $d$ is bounded by,
$$
f_{m',{{\alpha m'}\over{m\rbar}}}(\frac{d\alpha}{2m\rbar})+
g_{m',{{\beta m'}\over{m \rbar}}}(\frac{d\beta}{2m\rbar}).
$$
\end{theorem}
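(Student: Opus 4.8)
The energy half of the statement is immediate from the model: by definition a configuration with $m'$ active middle-stage nodes draws power $w(m')$, so nothing needs to be argued there. The work is entirely in the delay bound, and the plan is to avoid re-running the full input/middle-stage analysis for general $m'$ and $\rbar$ by reducing to the normalized regime $m'=m$, $\rbar=1$ that has already been analyzed. Concretely, I would exhibit a parameter-and-time rescaling that turns an arbitrary instance into a normalized one with identical sample paths, and then quote the end-to-end bound $f_{m,\alpha}(\frac{d\alpha}{2m})+g_{m,\beta}(\frac{d\beta}{2m})$ derived in the previous subsection.

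The reduction I would set up is the substitution $\hat m=\hat{m'}=m'$, $\hat{\rbar}=1$, $\hat\alpha=\frac{\alpha m'}{m\rbar}$, $\hat\beta=\frac{\beta m'}{m\rbar}$, together with a rescaling of time by the factor $\rbar$. First I would check that the hatted instance is genuinely normalized: all its middle nodes are active and its per-input arrival rate is $1$, so the earlier formulas apply verbatim. Then I would verify sample-path equivalence by matching the two quantities that drive the analysis. The per-buffer arrival intensity $\hat{\rbar}/\hat{m'}=1/m'$ agrees with the original $\rbar/m'$ once time is dilated by $\rbar$, and the mesh capacities $\hat\alpha/\hat m$ and $\hat\beta/\hat m$ reduce to $\frac{\alpha}{m\rbar}$ and $\frac{\beta}{m\rbar}$, which are exactly the original $\alpha/m,\beta/m$ read on the dilated clock. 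Since the random routing is unaffected by relabeling which nodes are active, this exhibits a measure-preserving bijection between the two systems' trajectories, so every probability computed for the hatted system transfers to the original one; in particular the middle-stage independence approximation is a structural property of the routing and is preserved.

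With the reduction in hand I would apply the normalized end-to-end bound to the hatted instance, obtaining $Pr[\hat D\ge\hat d]\le f_{\hat m,\hat\alpha}(\frac{\hat d\hat\alpha}{2\hat m})+g_{\hat m,\hat\beta}(\frac{\hat d\hat\beta}{2\hat m})$, and then translate back. Substituting $\hat m=m'$, $\hat\alpha=\frac{\alpha m'}{m\rbar}$, $\hat\beta=\frac{\beta m'}{m\rbar}$ immediately produces the advertised subscripts $f_{m',\alpha m'/(m\rbar)}$ and $g_{m',\beta m'/(m\rbar)}$. The remaining task is to express the queue-size arguments of $f$ and $g$ in terms of the original delay $d$, using that an end-to-end delay splits into an input-stage half and a middle-stage half and that each half is converted into a queue occupancy through the corresponding link rate.

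The step I expect to be the main obstacle is precisely this last bookkeeping. Queue sizes are packet counts and are invariant under the rescaling, whereas delays are durations and pick up a factor of $\rbar$; the arguments fed into $f$ and $g$ are occupancies obtained by multiplying a (rescaled) delay by a (rescaled) link rate, so the $\rbar$ factors from the two sources must be tracked jointly and combined correctly to land on the stated arguments $\frac{d\alpha}{2m\rbar}$ and $\frac{d\beta}{2m\rbar}$. I would therefore carry the clock change explicitly through the delay-to-queue conversion $Pr[D^{(1)}_{ij}(t-\tfrac d2)\ge\tfrac d2]\le Pr[Q^{(1)}_{ij}(t-d)\ge \tfrac{d}{2}\cdot(\text{link rate})]$ rather than guessing the constant, and cross-check it against the direct derivation of $\mu$ and $\delta$ for general $m',\rbar$, where $\mu=\rbar(t-s)/m'$ and the excess factor rearrange into exactly the normalized forms under the same substitution.
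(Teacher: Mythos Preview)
Your proposal is correct and follows essentially the same route as the paper: reduce the general $(m',\rbar)$ instance to the already-analyzed normalized case via the substitution $\hat\alpha=\alpha m'/(m\rbar)$, $\hat\beta=\beta m'/(m\rbar)$ together with a time rescaling by $\rbar$, then read off the end-to-end bound $f+g$ from the previous subsection. You are in fact more explicit than the paper, which simply asserts that the rescaled system ``has exactly the same behavior as the old one'' and states the theorem without further argument; your choice $\hat m=\hat{m'}=m'$ is the one consistent with the subscripts $f_{m',\cdot}$, $g_{m',\cdot}$ in the theorem (the paper's displayed text writes $\hat m=m$, which appears to be a slip), and your caution about tracking the $\rbar$ factor through the delay-to-queue conversion is exactly the bookkeeping the paper elides.
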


\subsection{Dealing with dependence}
\label{s:delay-exact}
In our analysis of the middle stage we implicitly made the
simplifying assumption that $\sum_{i'} A_{i'jk}^{(1)}(s-d,t)$ is
independent from $D_{ij}(s)$. However, this is not strictly true since
the arrivals for the path $ijk$ will affect both $\sum_{i'}
A_{i'jk}^{(1)}(s-d,t)$ as well as $D_{ij}(s)$.  Since the $ijk$ flow
represents only a $1/m$ fraction of the traffic that contributes to
$D_{ij}(s)$, this will typically have a negligible effect on the
eventual results. However, if we required a true upper bound on the
probability distribution of $Q_{jk}^{(2)}$ we must use the following
adaptation of the formula, which for each $\hat{\imath}$, conditions
the event $D_{\hat{\imath}j}^{(1)}(s)\ge d$ on whether or not
$A_{\hat{\imath}jk}^{(1)}(s-d,t)$ is greater than
$\frac{5(t-s)\beta}{mn}$. More formally,

\begin{eqnarray*}
&& Pr[Q_{jk}^{(2)}(t)\ge q]\\
&\le&
\sum_d\sum_{s\le t} \left( \sum_{\hat{\imath}}Pr[A_{\hat{\imath}jk}^{(1)}(s-d,t)\ge \frac{5(t-s)\beta}{mn}] + \right.\\
&& \sum_{\hat{\imath}}
Pr[\sum_{i\neq \hat{\imath}} A_{ijk}^{(1)}(s-d,t)\ge 
 \frac{(t-s)\beta}{m}+q-\frac{5(t-s)\beta}{mn}]\times\\
&&\left.Pr[D_{\hat{\imath}j}^{(1)}((s))\ge d|A_{\hat{\imath}jk}^{(1)}(s-d,t)\le \frac{5(t-s)\beta}{mn}]\right) 
\end{eqnarray*}

%

\subsection{Queues at output}

We conclude this section by explaining how the analysis can be extended if we also wish to bound the delay on the output link from the router. 
Let $Q_k^{(3)}(t)$ be the queue at the head of the
output link and recall that this link has speed $1$. To bound $Q_{k}^{(3)}(t)$ at
time $t$, let $s\le t$ be the last time that queue $Q_{k}^{(3)}$ was
empty. For $Q_k^{(3)}$ to be larger than $q$, there must be at least
$t-s+q$ distinct packets in $Q_k^{(3)}$ during the time period
$[s,t]$. Further, let $s-d$ be the earliest time one of these packets
arrived at an input. Suppose that the path of this packet is
$ijk$. Then the packet must either be waiting in the queue
$Q_{ij}^{(1)}$ at time $s-\frac{d}{2}$ or it must be waiting in the
queue $Q_{jk}^{(2)}$ at time $s-\frac{d}{2}$. In the former case the
packet must have experienced delay of $\frac{d}{2}$ in $Q_{ij}^{(1)}$
at time $s-\frac{d}{2}$ and in the latter case it must have
experienced delay of $\frac{d}{2}$ in $Q_{jk}^{(2)}$ at time $s$.
Therefore,
\begin{eqnarray*}
&& Pr[Q_k^{(3)}(t)\ge q] \\
&\le & \sum_d \sum_{s\le t} Pr
\left[
\sum_{ij} A_{ijk}(s-d,t)\ge t-s+q
\right]\cdot \\
&& 
\left(
Pr[\exists ij, D_{ij}^{(1)}(s-\frac{d}{2})\ge \frac{d}{2}]+ \right. 
\left. Pr[\exists j, D_{jk}^{(2)}(s)\ge \frac{d}{2}]
\right)\\
&\le& \sum_d nmf_{m,\alpha}(\frac{d\alpha}{2m})\cdot mg_{m,\beta}(\frac{d\beta}{2m})\cdot\frac{1}{\ve}(d-q+\sigma_k).
\end{eqnarray*}
In addition, as before, we can convert this bound to a bound on delay for the third stage.  
Let
$D_{k}^{(3)}$ be the maximum delay that some packet has experienced
at time $t$ in the queue $Q_{k}^{(3)}(t)$.  For this delay to be more
than $d$ at time $t$, some packet must be in the queue at time
$t-d$. Since link $k$ operates in a FIFO manner, the queue at time
$t-d$ must be at least $d(1-\ve)$.
Hence $Pr[D_{k}^{(3)}(t)\ge d]\le Pr[Q_{k}^{(3)}(t-d)\ge d(1-\ve)]\le 
\sum_{d'} nmf_{m,\alpha}(\frac{d'\alpha}{2m})\cdot mg_{m,\beta}(\frac{d'\beta}{2m})\cdot\frac{1}{\ve}(d'-d(1-\ve)+\sigma_k)$.

\section{Numerical Results}
\label{s:numerical}

In this section we show numerical examples of the queue bounds.  In
particular for these calculations we use formulas that are similar to
those derived in Section~\ref{s:delay-approx} but we use Chernoff
bounds of the form (\ref{eq:cb1}) rather than (\ref{eq:cb2}) since the
former are tighter.

Figure~\ref{f:queue} plots the logarithm of the probability of a
middle-stage queue $jk$ exceeding $q$ packets against the queue size
$q$.  
For this instance,
the router is $20\times 80\times 20$.  The traffic is fully loaded for
each of the inputs and outputs.  We vary the speedup $\alpha=\beta$
in the range of $2, 3, 4$ and $5$. As we can see from the figure the
logarithm of the probability decreases linearly with the queue size,
which means the probability decreases exponentially with the queue. As
expected, we can also see that with increasing speedup, the
probability of the queue size exceeding $q$ drops.

\begin{figure}
\begin{center}
\includegraphics[width=3.5in]{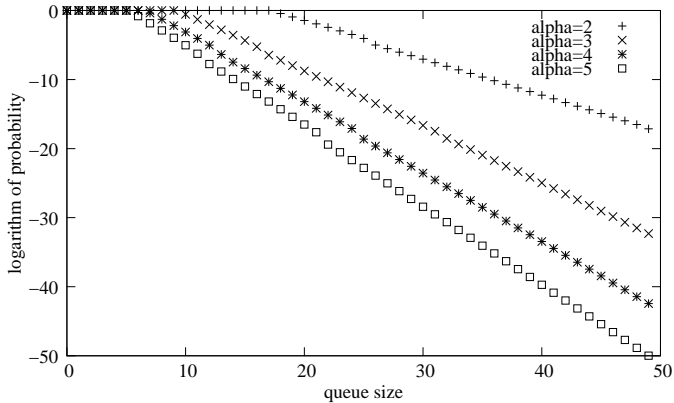}
\end{center}
\vspace{-0.6cm}
\caption{Log of probability against queue size, for $n=20$, $m=80$ and
fully loaded traffic. From top to bottom, the curves correspond to
increasing speedup from $\alpha=\beta=2$, $3, 4$ and $5$.  }
\label{f:queue}
\end{figure}

Figure~\ref{f:queue-power} demonstrates the tradeoff between the queue
size and the number of active middle-stage nodes. For this instance,
the router is again $20\times 80\times 20$. The link rate of the
interconnect is set to $1/20$.  The traffic is fully loaded for each
of the inputs and outputs.  If we keep all $m=80$ nodes in the middle
stage active, we can see from the bottom curve of
Figure~\ref{f:queue-power} that the queue size is the
smallest. However, this option is also the most energy consuming as
all 80 middle-stage nodes are kept active.  For the other extreme, we
can activate 40 middle-stage nodes, which is the most energy
efficient.  However, we can see from the bottom curve of
Figure~\ref{f:queue-power} that the queue size is the smallest.  The
curves in between correspond to the intermediate cases in which the
number of active middle-stage nodes are 50, 60 and 70.  

\begin{figure}
\begin{center}
\includegraphics[width=3.5in]{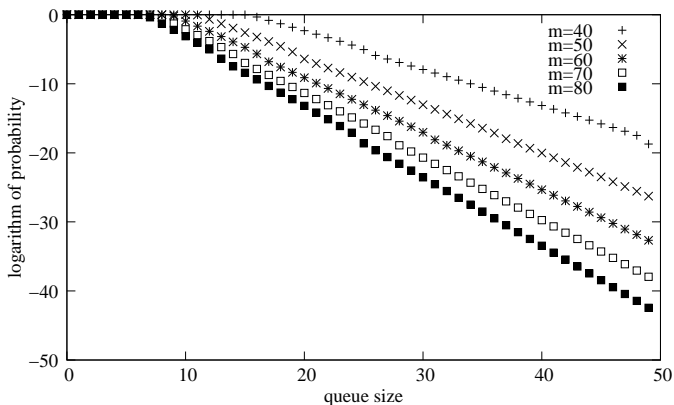}
\end{center}
\vspace{-0.6cm}
\caption
{Log of probability against queue size, for same amount of total
traffic but varying number of active middle-stage nodes. From
top to bottom, the curves correspond to increasing number of
active middle-stage nodes from $m=40, 50, 60, 70, 80$. The number
of inputs and outputs is $n=20$ and interconnect link rate is $1/20$.}
\label{f:queue-power}
\end{figure}

\section{Conclusion}

In this paper we revisit the Load-Balanced Router architecture,
motivated by its potential of delivering energy proportionality for
routers.  We offer a detailed analysis on the queue lengths and packet
delays under a simple random routing algorithm which is robust against
all admissible traffic. This allows us to observe a trade off between
performances such as queue size against energy consumption. 

Our paper does not focus on algorithms that optimize the number of
active middle-stage nodes.  We give a very simple argument for which
the size of the active middle stage is proportional to the {\em
maximum} traffic load over all inputs and outputs.  It is an intriguing open
question to see how one could make sure that the size of the active
middle stage is proportional to the traffic average over all input, not
to the maximum.

\bibliographystyle{abbrv}
\bibliography{/home/ylz/GREEN/refs,/home/ylz/mainrefs}

\cm{
\section{Appendix}

We make use of the following equations in deriving the probability bounds.

\begin{eqnarray*}
\sum_{i=a}^{\infty} z^i&=& \frac{z^a}{1-z} \\ 
\sum_{i=a}^{\infty}
iz^{i-1} &=& \frac{az^{a-1}-(a-1)z^a}{(1-z)^2}\\
\sum_{i=a}^{\infty} i(i-1)z^{i-2} &=& 
\end{eqnarray*}
\begin{eqnarray*}
\hspace{1cm}\frac{(a^2-a)z^{a-2}+(-2a^2+4az)z^{a-1} + (a^2-3a+2)z^a}{(1-z)^3}
\end{eqnarray*}
}

\cm{
\begin{eqnarray*}
\sum_{i=a}^{\infty} z^i&=& \frac{z^a}{1-z},\\

\sum_{i=a}^{\infty}
iz^{i-1} &=& \frac{d}{dz}\sum_{i=a}^{\infty} z^i\\ &=&
\frac{d}{dz}\frac{z^a}{1-z}\\ &=& \frac{az^{a-1}-(a-1)z^a}{(1-z)^2},\\

\sum_{i=a}^{\infty} i(i-1)z^{i-2} &=&
\frac{d^2}{dz^2}\sum_{i=a}^{\infty} z^i\\ &=&
\frac{d^2}{dz^2}\frac{z^a}{1-z}\\ &=&
\frac{d}{dz}\frac{az^{a-1}-(a-1)z^a}{(1-z)^2}\\ &=&
\frac{(1-z)^2(a(a-1)z^{a-2}-a(a-1)z^(a-1))+2(1-z)(az^{a-1}-(a-1)z^a)}
{(1-z)^4}\\
&=&\frac{a^2z^{a-2}-az^{a-2}-a^2z^{a-1}+az^{a-1}-a^2z^{a-1}+az^{a-1}+
a^2z^a-az^a+2az^{a-1}-2az^a+2z^a}{(1-z)^3}\\
&=& \frac{(a^2-a)z^{a-2}+(-2a^2+4az)z^{a-1}+(a^2-3a+2)z^a}{(1-z)^3}.
\end{eqnarray*}
}
\end{document}